\newcommand{\arxivversion}[1]{#1}
\newcommand{\aamasversion}[1]{}

\arxivversion{\documentclass{article}}
\aamasversion{\documentclass[sigconf,anonymous]{aamas} }

\usepackage{graphicx} 
\usepackage{tcolorbox}
\usepackage{color}

\usepackage{xspace}
\usepackage{algorithm}
\usepackage{algpseudocode}
\usepackage{float}
\usepackage{amsmath,amsthm,amssymb,bbm,thm-restate,thmtools}
\usepackage{todonotes}

\arxivversion{\usepackage{charter,eulervm}
\usepackage{fullpage}
}




\newcommand{\NPC}{\ensuremath{\mathsf{NP}}\text{-complete}\xspace}
\newcommand{\NPH}{\ensuremath{\mathsf{NP}}\text{-hard}\xspace}
\newcommand{\PNPH}{para-\ensuremath{\mathsf{NP}\text{-hard}}\xspace}
\newcommand{\el}{\ensuremath{\ell}\xspace}

\newcommand{\WOH}{\ensuremath{\mathsf{W[1]}}-hard\xspace}

\newcommand{\FPT}{\ensuremath{\mathsf{FPT}}\xspace}

\let\oldlambda\lambda
\renewcommand{\lambda}{\ensuremath{\oldlambda}\xspace}
\let\oldalpha\alpha
\renewcommand{\alpha}{\ensuremath{\oldalpha}\xspace}
\let\oldDelta\Delta
\renewcommand{\Delta}{\ensuremath{\oldDelta}\xspace}

\newcommand{\yes}{{\sc yes}\xspace}

\newcommand{\caveat}{\ensuremath{\mathsf{CoNP\subseteq NP/Poly}}\xspace}

\renewcommand{\AA}{\ensuremath{\mathcal A}\xspace}
\newcommand{\BB}{\ensuremath{\mathcal B}\xspace}
\newcommand{\CC}{\ensuremath{\mathcal C}\xspace}

\newcommand{\EE}{\ensuremath{\mathcal E}\xspace}

\newcommand{\GG}{\ensuremath{\mathcal G}\xspace}

\newcommand{\OO}{\ensuremath{\mathcal O}\xspace}
\newcommand{\PP}{\ensuremath{\mathcal P}\xspace}

\newcommand{\UU}{\ensuremath{\mathcal U}\xspace}
\newcommand{\VV}{\ensuremath{\mathcal V}\xspace}

\newcommand{\KE}{{\sc Kidney Exchange}\xspace}

\newcommand{\BP}{{\sc Bin Packing}\xspace}

\newcommand{\tw}{\ensuremath{\tau}\xspace}
\let\mydelta\delta
\renewcommand{\delta}{\ensuremath{\mydelta}\xspace}

\let\mytau\tau
\renewcommand{\tau}{\ensuremath{\mytau}\xspace}

\let\mytheta\theta
\renewcommand{\theta}{\ensuremath{\mytheta}\xspace}

\let\mygamma\gamma
\renewcommand{\gamma}{\ensuremath{\mygamma}\xspace}

\let\myGamma\Gamma
\renewcommand{\Gamma}{\ensuremath{\myGamma}\xspace}
\usepackage{tikz}
\usetikzlibrary{arrows,positioning}

\arxivversion{
\newtheorem{proposition}{\bf Proposition}

\newtheorem{theorem}{\bf Theorem}
\newtheorem{lemma}{\bf Lemma}

\newtheorem{definition}{\bf Definition}

}

\usepackage{cleveref}

\crefname{thm}{Theorem}{\bf Theorems}
\crefname{cor}{Corollary}{\bf Corollaries}

\crefname{theorem}{Theorem}{\bf Theorems}
\crefname{observation}{Observation}{\bf Observations}
\crefname{lemma}{Lemma}{\bf Lemmata}
\crefname{corollary}{Corollary}{\bf Corollaries}
\crefname{proposition}{Proposition}{\bf Propositions}
\crefname{definition}{Definition}{\bf Definitions}
\crefname{claim}{Claim}{\bf Claims}
\crefname{remark}{remark}{\bf Remark}
\crefname{reductionrule}{Reduction rule}{\bf Reduction rules}
\crefname{probdefinition}{Problem Definition}{Problem Definitions}

\usepackage{tikz}
\usetikzlibrary{positioning,arrows.meta}

\sloppy

\aamasversion{


\setcopyright{ifaamas}
\acmConference[AAMAS '26]{Proc.\@ of the 25th International Conference
on Autonomous Agents and Multiagent Systems (AAMAS 2026)}{May 25 -- 29, 2026}
{Paphos, Cyprus}{C.~Amato, L.~Dennis, V.~Mascardi, J.~Thangarajah (eds.)}
\copyrightyear{2026}
\acmYear{2026}
\acmDOI{}
\acmPrice{}
\acmISBN{}



\acmSubmissionID{1534}

}

\aamasversion{\title[Kidney Exchange]{Kidney Exchange: Faster Parameterized Algorithms and\\ Tighter Lower Bounds}}

\arxivversion{\title{Kidney Exchange: Faster Parameterized Algorithms and Tighter Lower Bounds}}


\arxivversion{

\usepackage{authblk}

\author[1]{Aritra Banik\thanks{aritra@niser.ac.in}}
\author[2]{Sujoy Bhore\thanks{sujoy@cse.iitb.ac.in}}
\author[3]{Palash Dey\thanks{palash.dey@cse.iitkgp.ac.in}}
\author[4]{Abhishek Sahu\thanks{abhiksheksahu@niser.ac.in}}

\affil[1,4]{National Institute of Science Education and Research, Bhubaneswar}
\affil[2]{Indian Institute of Technology Bombay}
\affil[3]{Indian Institute of Technology Kharagpur}
}




\aamasversion{
\begin{abstract}
    The kidney exchange mechanism allows many patient-donor pairs who are otherwise incompatible with each other to come together and exchange kidneys along a cycle. However, due to infrastructure and legal constraints, kidney exchange can only be performed in small cycles in practice. In reality, there are also some altruistic donors who do not have any paired patients. This allows us to also perform kidney exchange along paths that start from some altruistic donor. Unfortunately, the computational task is \NPC. To overcome this computational barrier, an important line of research focuses on designing faster algorithms, both exact and using the framework of parameterized complexity. 

    The standard parameter for the kidney exchange problem is the number $t$ of patients that receive a healthy kidney. The current fastest known deterministic \FPT algorithm for this problem, parameterized by $t$, is $\OO^\star\left(14^t\right)$. In this work, we improve this by presenting a deterministic \FPT algorithm that runs in time $\OO^\star\left((4e)^t\right)\approx \OO^\star\left(10.88^t\right)$. This problem is also known to be \WOH parameterized by the treewidth of the underlying undirected graph. A natural question here is whether the kidney exchange problem admits an \FPT algorithm parameterized by the pathwidth of the underlying undirected graph. We answer this negatively in this paper by proving that this problem is \WOH parameterized by the pathwidth of the underlying undirected graph. We also present some parameterized intractability results improving the current understanding of the problem under the framework of parameterized complexity.
\end{abstract}}


\aamasversion{
\keywords{kidney exchange, parameterized complexity, FPT algorithm, W-hardness, para-NP-hardness}}


\begin{document}


\aamasversion{
\pagestyle{fancy}
\fancyhead{}
}


\maketitle

\arxivversion{}

\section{Introduction}
Kidney failure is a global health crisis, affecting millions worldwide. Patients suffering from acute renal failure typically have two main treatment options: dialysis or kidney transplantation. However, dialysis is associated with a significantly lower quality of life, and the average lifespan of recipients on dialysis is only around 10 years. As a result, kidney transplantation is the preferred choice for many recipients. Unfortunately, the gap between the demand for and supply of kidneys is vast. In the U.S., approximately 40,000 people are added to the kidney transplant waiting list every year, but only about 20,000 find a compatible donor~\cite{walsh2021}. This discrepancy has led to longer waiting times, with the current median wait for a kidney transplant from a deceased donor averaging 4.05 years~\cite{Stewart2023}. Tragically, thousands of patients are removed from the waiting list each year because they either die or become too ill to undergo the surgery.

In this challenging context, Kidney Paired Donation (KPD), also known as the Kidney Exchange Program, was introduced in 2000 to provide an innovative solution to the problem of incompatible donor-recipient pairs~\cite{rapaport1986case,ke04}. In KPD, recipients who have an incompatible donor can exchange their donor with another pair, forming a ``barter market"~\cite{ke04,ke07,AbrahamBS07} that increases the likelihood of successful kidney transplants. This system allows participants to find compatible donors more easily, often resulting in higher-quality matches that may lead to longer-lasting kidneys~\cite{segev2005kidney}. Since its introduction, KPD has expanded globally, with countries like the U.S., U.K., and India adopting the program, resulting in thousands of successful transplants and a significantly higher chance of matching better-quality kidneys~\cite{biro2021modelling}.

The central challenge in any KPD system is the Kidney Exchange Problem, how to allocate kidneys to maximize the number of transplants. This is typically modeled as a directed graph where each vertex represents a patient-donor pair. A directed edge from vertex u to v indicates that the donor from u is compatible with the recipient at v.

Kidney exchanges usually happen through cycles, where each recipient receives a kidney from the donor of the previous vertex. Since donors can withdraw once their paired recipient is treated, all surgeries in a cycle must be performed simultaneously to prevent defaults. Due to logistical constraints (each transplant involves two surgeries), only small cycles are allowed~\cite{ke05,AbrahamBS07,manlove2015paired,mak2017kidney,ashlagi2021kidney,freedman2020adapting}.

In addition to cycles, some programs allow chains initiated by altruistic donors (non-directed donors, or NDDs) who do not have a paired recipient. In a chain, a donor gives to a compatible recipient, whose paired donor donates to another recipient, and so on. Unlike cycles, chains can proceed non-simultaneously, since a broken chain does not leave any recipient worse off~\cite{anderson,dickerson2016position}. In the worst case, affected pairs can re-enter the exchange~\cite{anderson}.

With altruistic donors, the problem becomes finding a collection of disjoint cycles and chains that maximizes coverage of patient-donor pairs while complying with logistical constraints. This forms the core optimization problem in kidney exchanges.

Since the computational problem is \NPC, there is a long line of work focusing on designing heuristics that work reasonably well in practice. While existing clearing algorithms perform well in practice, understanding their theoretical limits provides insight into whether faster or more scalable mechanisms are even possible. Xiao and Wang~\cite{xiao2018exact} were the first to propose an exact algorithm for the kidney exchange problem that is guaranteed to output an optimal solution. Maiti and Dey~\cite{MaitiD22} extended this line of work by initiating the study of parameterized complexity of the kidney exchange problem. This work improves the current state of the art in this line of research.

\subsection{Our Contribution}
\begin{table}[h]
\renewcommand{\arraystretch}{1.5}
    \centering
    \begin{tabular}{c|c}
        Parameter & \KE \\\hline\hline
        $t$ & FPT ($\OO^\star\left(10.88^t\right)$) [\Cref{fpt}] \\\hline
        $t+\ell_p+\ell_c$  & No poly kernel [\Cref{cor:kidney-no-poly}] \\\hline
        pathwidth & \WOH$(\ell_c=0)$ [\Cref{thm:pathwidth-path}] \\\hline
        pathwidth & \WOH $(\ell_p=0)$ [\Cref{thm:pathwidth-cycle}] \\\hline
       DFVS+$\ell_p$+$\ell_c$   & Para-NP Hard [\Cref{cor-para}] \\\hline
    \end{tabular}
    \caption{Summary of results. $t:$ the number of patients receiving a kidney; $\el_p (\ell_c):$ maximum allowed length of a kidney exchange path (cycle); DFVS: size of any minimum directed feedback vertex set.}
    \label{tab:results}
\end{table}

In this work, we obtain the fastest known deterministic FPT algorithm for the \textsc{Kidney Exchange} problem with respect to the parameter~$t$, the number of patients helped.
The previous best algorithms of Johnson et al.~\cite{ijcai2024p9} ran in randomized time
$O^\star(4^t)$ and deterministic time $O^\star(14^t)$\footnote{$\mathcal{O}^\star(\cdot)$ notation hides polynomial factors in $n$.}.
Our approach improves the deterministic running time to
$O^\star((4e)^t) \approx O^\star(10.88^t)$.
The key ingredients are a careful combination of \emph{color coding}-
derandomized via $t$-perfect hash families, and a \emph{subset-based dynamic program} that systematically builds vertex-disjoint cycles and altruistic paths.
By adapting the fastest color-coding subroutines for detecting cycles and
(altruistic paths) of prescribed colors, we achieve the stated bound.

\begin{restatable*}{thm}{fptalgo}\label{fpt}
\KE parameterized by $t$, the number of
patients who can be helped, admits a deterministic FPT algorithm running in time
$\mathcal{O}^\star\!\bigl((4e)^t\bigr) \approx \mathcal{O}^\star\!\bigl(10.88^t\bigr)$.
\end{restatable*}

We next examine the prospects for polynomial kernels for the same parameter $t$. It is worth noting that exploring the existence of a polynomial kernel for a practically significant problem, such as kidney exchange, is highly motivating, as kernelization algorithms can substantially reduce the size of problem instances commonly encountered in practice.
While it is relatively straightforward to design FPT algorithms
for our problem without aiming for a specific running time, obtaining a polynomial kernel turns out to be unlikely under the standard complexity theoretic assumption $\mathrm{NP}\subseteq \mathrm{coNP}/\mathrm{poly}$. In particular, by a straightforward reduction from the \textsc{Directed $k$-Path} problem of Bodlaender, Jansen, and Kratsch~\cite{BodlaenderJK13}—which is known not to admit a polynomial kernel or compression unless $\mathrm{NP}\subseteq \mathrm{coNP}/\mathrm{poly}$—we show that such a compression remains impossible even for the larger parameter of $t+\ell_p+\ell_c+|\BB|$. This provides a nearly complete picture with respect to the
natural solution-size parameter.

\begin{restatable*}{thm}{nopolykernel}
\label{cor:kidney-no-poly}
Under the assumption $\mathrm{NP}\not\subseteq\mathrm{coNP}/\mathrm{poly}$, \KE admits no polynomial kernel (nor polynomial compression) parameterized by $t+\ell_p+\ell_c+|\mathcal{B}|$ where \BB is the set of altruistic vertices.
\end{restatable*}

We further strengthen the hardness landscape established by Johnson et al.~\cite{ijcai2024p9}, who proved that the problem remains \textsf{W[1]}-hard when parameterized by the treewidth of the underlying graph. Our results show that the problem remains \textsf{W[1]}-hard even when parameterized by the strictly larger parameter of pathwidth, and that even the addition of $\ell_c$ or $\ell_p$ alone does not make the problem (fixed-parameter)tractable. Specifically, we provide two separate hardness results for \KE problem: one on bounded-pathwidth DAGs (where $\ell_c = 0$) and another on bounded-pathwidth graphs with no altruistic vertices (where $\ell_p = 0$). These results follow from a parameter-preserving reduction from the \BP problem with unary-encoded weights~\cite{DBLP:journals/jcss/JansenKMS13}.

It is worth noting that when parameterized by all three parameters together, the problem becomes \FPT, as a corollary of the result by Maiti et al.~\cite{MaitiD22}, who proved fixed-parameter tractability when parameterized by \emph{treewidth} and $\max{\ell_p, \ell_c}$. Together, these results not only provide a more precise boundary of tractability but also give a complete picture of the parameterized complexity of \KE\ with respect to \emph{pathwidth}, $\ell_p$, and $\ell_c$.


\begin{restatable*}{thm}{whardnocycle} \label{thm:pathwidth-path}
\textsc{Kidney Exchange} is \textsf{W[1]}-hard when parameterized by the undirected pathwidth of the input graph, even when the graph is acyclic.
\end{restatable*}


\begin{restatable*}{thm}{noaltruisticvertex}\label{thm:pathwidth-cycle}
\textsc{Kidney Exchange} is \textsf{W[1]}-hard when parameterized by the undirected pathwidth of the input graph, even when no altruistic vertex is present. 
\end{restatable*}


While \cref{thm:pathwidth-path} already implies W-hardness when parameterized by the size of a Directed Feedback Vertex Set (DFVS), we strengthen this result by establishing para-NP-hardness even for the larger parameter $\text{DFVS} + \ell_p+\ell_c$, by showing that the problem remains NP-hard on DAGs even when $\ell_p$ is bounded. Note that since the associated graph is a DAG one may assume $\ell_c=0$ without loss of generality. Our proof uses a carefully designed two-step polynomial reduction from the \textsc{3-Partition} problem, where each set of a given weight corresponds to an equivalent altruistic path of proportional length in the resulting instance of the kidney exchange problem.


\begin{restatable*}{thm}{paranphardness}\label{thm:para-np-hard}
The \textsc{Kidney Exchange} problem on DAGs is NP-hard even when restricted to instances with a fixed constant path-length bound $\ell_p$.
\end{restatable*}

\begin{restatable*}{cor}{dfvsparanphard}\label{cor-para}
The \textsc{Kidney Exchange} problem is para-NP-hard with respect to DFVS+$\ell_p$+$\ell_c$.
\end{restatable*}

\begin{table*}[h]
	\begin{center}
	\begin{tabular}{|cc||cc|}\hline
		\BB & set of altruistic vertices & $t$ & target number of patients to receive kidneys\\
		$\el_p$ & length of the longest path allowed & $\el_c$ & length of the longest cycle allowed\\
		$\tw$ & treewidth of underlying undirected graph & $\theta$ & number of vertex types\\
		$\Delta$ & maximum degree of underlying undirected graph & $\el$ & $\max\{\el_p,\el_c\}$\\
		\hline
	\end{tabular}
\label{table1}
\caption{Notation table.}
\end{center}
\end{table*}

\subsection{Related Work}

Rapaport first proposed the idea of kidney exchange~\cite{rapaport1986case}. By 2005, over 60,000 patients were listed on the UNOS transplant registry~\cite{segev2005kidney}, and the New England Renal Transplant Oversight Committee approved a regional clearinghouse for exchanges in 2004~\cite{ke05}. Roth et al.~\cite{roth05} introduced strategy-proof, constrained-efficient mechanisms under 0–1 preferences. Some studies limit exchanges to cycles~\cite{constantino2013new,klimentova2014,sonmez2014altruistically}, while others allow both cycles and chains~\cite{manlove2015paired,glorie2014kidney,xiao2018exact}. All variants can be framed as graph packing problems, with connections to barter markets and set packing explored by Jia et al.~\cite{jia2017efficient}.

The problem is NP-hard, even with short cycles or chains~\cite{krivelevich2007approximation,AbrahamBS07,DBLP:journals/algorithmica/BelmonteHKKKKLO22}, and is para-NP-hard when parameterized by cycle and path lengths. Approximation algorithms have been proposed via set packing~\cite{krivelevich2007approximation,jia2017efficient}.

Practical methods include integer programming~\cite{manlove2015paired}, used by the UK's NHS for national donor matching. Dickerson et al.~\cite{dickerson2016position} emphasized the need to bound chain lengths for feasibility. Glorie et al.~\cite{glorie2014kidney} developed a branch-and-price algorithm supporting multi-objective optimization. Li et al.~\cite{li2014egalitarian} introduced a polynomial-time algorithm for Lorenz-dominant fractional matchings-fair, incentive-compatible, and implementable via lotteries. Approximation methods for short cycles were given by Biro et al.~\cite{biro2009maximum}, while Riascos-Alvarez et al.~\cite{klimentova2014} proposed a decomposition method for large instances, allowing prioritization between chains and cycles.

Dickerson et al.~\cite{dickerson2016position} also introduced the vertex type parameter to capture structural similarity in real graphs. Xiao and Wang~\cite{xiao2018exact} built on this with an exact algorithm of complexity $\OO\left(2^n n^3\right)$ and an FPT algorithm parameterized by vertex types, assuming unbounded chain and cycle lengths. Maiti and Dey~\cite{MaitiD22} initiated the study of parameterized complexity of the kidney exchange problem and presented a $\OO^\star\left(2^{\OO(t)}\right)$ time algorithm where $t$ is the maximum number of patients that receive a kidney; they also showed some other algorithmic and hardness results. Hebert-Johnson et al.~\cite{Hebert-JohnsonL24} improved this running time and presented a randomized algorithm for this problem running in time $\OO^\star\left(4^{t}\right)$ and some hardness results.

Lin et al.~\cite{lin2019randomized} studied the cycle-only case and proposed a randomized parameterized algorithm based on the number of recipients and maximum cycle length.

\section{Preliminaries}\label{sec:prelim}

For an integer $k$, we denote the sets $\{0,1,\ldots,k\}$ and $\{1,2,\ldots,k\}$ by $[k]_0$ and $[k]$ respectively.

A kidney exchange problem is formally represented by a directed graph $\GG=(\VV,\AA)$ which is known as the {\em compatibility graph}. A subset $\BB\subseteq\VV$ of vertices denotes {\em altruistic donors} (also called {\em non-directed donors}); the other set $\VV\setminus\BB$ of vertices denote a patient-donor pair who wish to participate in the kidney exchange program. We have a directed edge $(u,v)\in\AA$ if the donor of the vertex $u\in\VV$ has a kidney compatible with the patient of the vertex $v\in\VV\setminus\BB$. Kidney exchange happens either (i) along a {\em trading-cycle} $u_1, u_2, \ldots, u_k$ where the patient of the vertex $u_i\in\VV\setminus\BB$ receives a kidney from the donor of the vertex $u_{i-1}\in\VV\setminus\BB$ for every $2\leq i\leq k$ and the patient of the vertex $u_1$ receives the kidney from the donor of the vertex $u_k$, or (ii) along a {\em trading-chain} $u_1, u_2, \ldots, u_k$ where $u_1\in\BB, u_i\in\VV\setminus\BB$ for $2\le i\le k$ and the patient of the vertex $u_j$ receives a kidney from the donor of the vertex $u_{j-1}$ for $2\le j\le k$. Due to operational reasons, all the kidney transplants along a trading-cycle or a trading-chain should be performed simultaneously. This puts an upper bound on the length \el of feasible trading-cycles and trading-chains. We define the length of a path or cycle as the number of edges in it. The kidney exchange clearing problem is to find a collection of feasible trading-cycles and trading-chains which maximizes the number of patients who receive a kidney. Formally it is defined as follows.

\begin{definition}[\KE]\label{def:prob-ke}
Given a directed graph $\GG=(\VV,\AA)$ with no self-loops, an altruistic vertex set $\BB\subset\VV$, two integers $\el_p$ and $\el_c$ denoting the maximum length of respectively paths and cycles allowed, and a target $t$, compute if there exists a collection \CC of disjoint cycles of length at most $\el_c$ and paths with starting from altruistic vertices only each of length at most $\el_p$ which cover at least $t$ non-altruistic vertices. We denote an arbitrary instance of \KE by $(\GG,\BB,\el_p,\el_c,t)$.
\end{definition}

\subsection{Graph Theoretic Terminologies} In a graph \GG, $\VV[\GG]$ denotes the set of vertices in \GG and $\EE[\GG]$ denotes the set of edges in \GG. $\GG[\VV']$ denotes the induced subgraph on $\VV'$ where $\VV'\subseteq\VV[\GG]$. Two vertices $u$ and $v$ in a directed graph \GG are called vertices of the same type if they have the same set of in-neighbors and the same set of out-neighbors. \textcolor{black}{If there are no self loops in \GG, vertices of the same type form an independent set.} Treewidth measures how treelike an undirected graph is. We refer to \cite{CyganFKLMPS15} for an elaborate description of treewidth, tree decomposition, and nice tree decomposition. Since our graph is directed, whenever we mention treewidth of our graph, we refer to the treewidth of the underlying undirected graph; two vertices $u$ and $v$ are neighbors of the underlying undirected graph if and only if either there is an edge from $u$ to $v$ or from $v$ to $u$. \textcolor{black}{Also refer to the Table 1 for the important notations.}

\subsection{Parameterized Complexity}\label{subsec:pc}

A tuple $(x, k)$, where k is the parameter, is an instance of a parameterized problem. \emph{Fixed parameter tractability} (FPT) refers to solvability in time  $f(k) \cdot p(|x|)$ for a given instance $(x, k)$, where  $p$ is a polynomial in the input size $|x|$ and $f$ is an arbitrary computable function of $k$ . We use the notation $\OO^*(f(k))$ to denote $O(f(k)poly(|x|))$.

We say a parameterized problem is \PNPH if it is \NPH even for some constant values of the parameter.

\begin{definition}[Kernelization]~\cite{CyganFKLMPS15}
	A kernelization algorithm for a parameterized problem   $\Pi\subseteq \Gamma^{*}\times \mathbb{N}$ is an 
	algorithm that, given $(x,k)\in \Gamma^{*}\times \mathbb{N} $, outputs, in time polynomial in $|x|+k$, a pair 
	$(x',k')\in \Gamma^{*}\times
	\mathbb{N}$ such that (a) $(x,k)\in \Pi$ if and only if
	$(x',k')\in \Pi$ and (b) $|x'|,k'\leq g(k)$, where $g$ is some
	computable function.  The output instance $x'$ is called the
	kernel, and the function $g$ is referred to as the size of the
	kernel. If $g(k)=k^{O(1)}$, then we say that
	$\Pi$ admits a polynomial kernel.
\end{definition}

It is well documented that the presence of a polynomial kernel for certain parameterized problems implies that the polynomial hierarchy collapses to the third level (or, more accurately, \caveat{}). As a result, polynomial-sized kernels are unlikely to be present in these problems. We use a polynomial-parameter-preserving  reduction to demonstrate kernel lower bounds.


\section{A Faster Deterministic \FPT parameterized by $t$}
In this section, we design the fastest known deterministic FPT algorithm for the problem parameterized by $t$, consisting of three main steps. In the first step, we apply a color-coding procedure that ensures every vertex involved in the solution receives a distinct color which plays a crucial role in the second step, where we design a dynamic programming (DP) algorithm that systematically checks whether a solution exists in which exactly one vertex from each color subset is selected. During the gradual construction of entries corresponding to larger color sets, we use as a subroutine, a slight modified version of the standard DP algorithm ~\cite{CyganFKLMPS15} that runs in $2^{|S|}$ time to determine whether a cycle or altruistic path exists that uses all colors in $S$ exactly once. The formal algorithm is presented below.




\subsection{Color Coding}

We employ the color-coding method of Alon, Yuster, and Zwick~\cite{AlonYZ1995}. In this technique, each vertex of $G$ is assigned one of $t$ colors uniformly at random. With probability at least $e^{-t}$, the $t$ vertices forming the desired solution all receive distinct colors. This randomization coloring can be eliminated (made deterministic) by using families of $t$-perfect hash functions of size $e^t \cdot t^{\mathcal{O}(\log t)}$~\cite{NaorSchulmanSrinivasan95}. Henceforth, we assume access to such an $(n,t)$-perfect hash family. By iterating over all colorings in the family, we may assume without loss of generality that in some coloring, the vertices of the solution are colored distinctly. For every coloring in the family we employ the following DP algorithm that checks if a colorful solution is possible for the \KE instance:

\subsection{Dynamic Programming Formulation}

Let $[t]=\{1,2,\ldots,t\}$ denote the set of colors. For every subset $T' \subseteq [t]$, define


\[
DP(T') = 
\begin{cases}
    1 & \parbox[t]{0.6\columnwidth}{there exists a collection of vertex-disjoint cycles and altruistic paths
that uses exactly one vertex from each color in $T'$.}\\
&\\
0 & \text{otherwise}
\end{cases}
\]

The recurrence is as follows:
\[
DP(T') = 1 \iff \exists T'' \subset T' \text{ such that } DP(T'')=1 \text{ and either}
\]
\begin{itemize}
    \item there is a directed cycle with exactly one vertex from each color in $T'\setminus T''$, or
    \item there is an altruistic path with exactly one vertex from each color in $T'\setminus T''$.
\end{itemize}

For finding such cycles/altruistic paths in time $2^{|T'\setminus T''|}$, we use the subsequent step.

\subsection{Detecting Cycles and Altruistic Paths of Prescribed Colors}

The task of checking whether there exists a cycle or altruistic path using exactly a set of colors $S \subseteq [t]$ can be solved via a slight modification of the standard color-coding dynamic programming algorithm for detecting colorful paths~\cite{CyganFKLMPS15}. The standard DP algorithm runs in $2^{|S|}$ time and determines whether a directed path exists that uses all colors exactly once. For an altruistic path, the DP formulation is restricted so that the path must start at an altruistic vertex, while the rest of the algorithm remains unchanged. For a colorful cycle, one can guess an arc and then, in the DP algorithm, search for a path that uses all colors exactly once and starts and ends at the endpoints of the guessed arc. Thus these algorithms run in time $\mathcal{O}^\star(2^{|S|})$ and can be used to determine required colorful altruistic paths/cycles .



\subsection{Run Time Analysis}
Given a colored graph (with $t$ colors), the DP algorithm computes all DP entries for all subsets $T'\subseteq [t]$. For each $T'$, we consider all partitions $T'' \subseteq T'$ where we look for a colorful solution that must use all colors in $T''$ and check for an additional colorful cycle/altruistic path using all colors in $T'\setminus T''$ (which takes time $2^{|T'\setminus T''|}$). Hence, the total running time is therefore bounded by

$$\sum_{T' \subseteq [t],} \;\sum_{T'' \subseteq T'} 2^{|T'\setminus T''|} \cdot n^{\mathcal{O}(1)}$$

We now simplify the double sum. For each $i=|T'|$, there are $\binom{t}{i}$ choices of $T'$, and for each such $T'$:
\[
\sum_{T'' \subseteq T'} 2^{|T'\setminus T''|} = \sum_{j=0}^i \binom{i}{j} \cdot 2^{i-j} = 3^i.
\]
Thus the total runtime to compute the DP table is
\[
\sum_{i=0}^t \binom{t}{i} \cdot 3^i = (1+3)^t = 4^t.
\]
Hence provided a colored input graph, the algorithm checks if the graph has a solution of size $t$ in time $\mathcal{O}^\star(4^t)$ time. Using families of $t$-perfect hash functions of size $e^t \cdot t^{\mathcal{O}(\log t)}$~\cite{NaorSchulmanSrinivasan95} (these many instances of colorful graphs) for coloring, we can bound the overall running time of our algorithm by
\[
\mathcal{O}^\star\!\left( e^t \cdot t^{\mathcal{O}(\log t)} \cdot 4^t  \right).
\]
\



\fptalgo

\section{No poly kernel parameterized by $t+\ell_p+\ell_c (+|\mathcal{B}|)$}
In this section, we provide a polynomial-time, parameter-preserving reduction from \textsc{Directed $k$-Path}, which asks whether a directed path of length $k$ exists and is known not to admit a polynomial compression (see \cref{prop:dir-kpath-no-poly} below) under standard complexity theoretic assumptions, to an equivalent instance of \KE, thereby establishing the desired hardness result.

\begin{proposition}\cite{BodlaenderJansenKratsch14}
\label{prop:dir-kpath-no-poly}
{\sc Directed $k$-Path} (Long Directed Path parameterized by $k$) admits neither a polynomial kernel nor a polynomial compression unless
\(\mathrm{NP}\subseteq\mathrm{coNP}/\mathrm{poly}\). 
\end{proposition}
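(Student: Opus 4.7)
The plan is to establish the kernelization lower bound through the cross-composition framework of Bodlaender, Jansen, and Kratsch. Specifically, I would \emph{cross-compose} the classical \NPH problem \textsc{Directed Hamiltonian Path} into \textsc{Directed $k$-Path}, parameterized by $k$. Recall that a cross-composition of an \NPH source problem $L$ into a parameterized target $(Q, \kappa)$ is a polynomial-time algorithm that takes $t$ instances of $L$ which are equivalent under a polynomial equivalence relation, and outputs a single instance of $Q$ whose membership is the OR of the inputs, with parameter $\kappa$ bounded polynomially in the largest input size plus $\log t$. Such a cross-composition rules out polynomial kernels and polynomial compressions unless $\mathrm{NP}\subseteq\mathrm{coNP}/\mathrm{poly}$.

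First I would define the polynomial equivalence relation on \textsc{Directed Hamiltonian Path}: two instances are equivalent if they have the same number of vertices $n$ (and malformed instances are grouped together in one additional class). This relation has polynomially many equivalence classes on instances of bounded size, so it qualifies as a polynomial equivalence relation. Then, given $t$ equivalent instances $G_1, \ldots, G_t$, each on exactly $n$ vertices, I would construct a target instance $(H, k)$ of \textsc{Directed $k$-Path} simply by letting $H$ be the disjoint union $G_1 \sqcup \cdots \sqcup G_t$ and setting $k = n$ (asking for a directed simple path on $n$ vertices). Because the $G_i$'s live in pairwise disconnected components of $H$, every directed path of $H$ lies entirely within some $G_i$; hence $H$ contains a directed path on $n$ vertices if and only if some $G_i$ admits a Hamiltonian path. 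This gives the required OR-semantics.

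Next I would check the parameter bound: the new parameter is $k = n$, which is polynomial in the maximum source instance size (in fact, it does not depend on $t$ at all), so the cross-composition condition $\kappa \leq (\max_i |x_i| + \log t)^{O(1)}$ is satisfied. Invoking the main theorem of the cross-composition framework, the existence of a polynomial compression for \textsc{Directed $k$-Path} into any language (in particular, into itself, yielding a polynomial kernel) would imply $\mathrm{NP}\subseteq\mathrm{coNP}/\mathrm{poly}$, contradicting the hypothesis. This establishes the proposition.

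The only delicate point is verifying the OR-semantics in the presence of a slightly exotic source choice: I picked \textsc{Directed Hamiltonian Path} because its disjoint union behaves cleanly, but one must confirm that \textsc{Directed Hamiltonian Path} is indeed \NPH when restricted to instances of a single vertex count $n$ (which is immediate, since the standard reductions produce graphs of a fixed size depending on the input). With that confirmed, the construction is transparent: disjoint union realizes the OR, the parameter equals the common vertex count, and no further gadgetry is needed. The heavy lifting has already been done by the cross-composition framework itself, so the main obstacle is simply the careful bookkeeping to ensure the polynomial equivalence relation, the OR-semantics, and the parameter bound all hold simultaneously.
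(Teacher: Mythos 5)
The paper does not prove this statement; it is imported verbatim as a known result of Bodlaender, Jansen, and Kratsch, so there is no in-paper argument to compare against. Your proof is correct and is essentially the standard argument behind the cited result: an OR-cross-composition of \textsc{Directed Hamiltonian Path} into \textsc{Directed $k$-Path} by taking the disjoint union of $t$ same-sized instances and setting $k=n$, with the OR-semantics following because a simple directed path is connected and hence confined to one component. Your closing ``delicate point'' about \textsc{Directed Hamiltonian Path} being \NPH on each fixed vertex count is not actually required by the cross-composition framework (only \NP-hardness of the unparameterized source language and a polynomial equivalence relation are needed), but including it does no harm.
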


\noindent\textbf{Reduction:} Given an input instance $(G=(V,E),k)$ of \textsc{Directed $k$-Path}, we create an equivalent instance $(\mathcal{G}=(V\cup\{a\}, E\cup\{(a,v)\mid v\in V\}), \mathcal{B}=\{a\}, \ell_p = k+1, \ell_c = 0, t = k+1)$ of \textsc{Kidney Exchange}, i.e., we add an additional single altruistic vertex $a$ with arcs from $a$ to every vertex $v \in V$. We set $\ell_c = 0$, $\ell_p = k+1$, and require $t = k+1$ transplants. 

\begin{lemma}
$(G=(V,E),k)$ of \textsc{Directed $k$-Path} is a yes instance if and only if $(\mathcal{G}=(V\cup\{a\}, E\cup\{(a,v)\mid v\in V\}), \mathcal{B}=\{a\}, \ell_p = k+1, \ell_c = 0, t = k+1)$ is a yes instance of \KE.
\end{lemma}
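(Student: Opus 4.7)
The plan is to unpack the structural constraints of the constructed instance and then verify both directions of the equivalence by directly reading off a path on one side from a path on the other.

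First, I would observe what the choice of parameters forces on any feasible solution. Since $\ell_c = 0$, no cycle can be used, so every transplant covered must lie on an altruistic path. Because $\mathcal{B} = \{a\}$ contains only one altruistic vertex and the collection in the solution must consist of vertex-disjoint components, the solution is a single altruistic path $P = a \to v_1 \to v_2 \to \cdots \to v_r$ with each $v_i \in V$. This path has length $r$ (measured in edges) and covers $r$ non-altruistic vertices. The constraints $\ell_p = k+1$ and $t = k+1$ then force $r = k+1$, so $P$ has exactly $k+1$ vertices from $V$, and the arcs $(v_i, v_{i+1})$ for $1 \le i \le k$ all lie in $E$ (since arcs of the form $(a, v)$ exist only out of $a$ by construction, while arcs between vertices of $V$ are precisely the arcs of $E$).

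For the forward direction, given a directed path $v_1 \to v_2 \to \cdots \to v_{k+1}$ of length $k$ in $G$, I would prepend the altruistic vertex $a$ using the arc $(a, v_1)$, which is present by construction, to obtain an altruistic path of length $k+1$ covering $k+1$ non-altruistic vertices; this is a valid solution since $\ell_p = k+1$ and $\ell_c = 0$ are satisfied vacuously for the (empty) cycle collection. For the backward direction, the structural observation above produces from any yes-solution a sequence $v_1, \ldots, v_{k+1}$ of distinct vertices in $V$ with $(v_i, v_{i+1}) \in E$, i.e., a directed path of length $k$ in $G$.

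The main obstacle, such as it is, is simply the careful bookkeeping of the length convention (edges vs.\ vertices) to ensure that $\ell_p = k+1$ and $t = k+1$ indeed correspond to exactly $k+1$ non-altruistic vertices on the path and thus to a $k$-edge path in $G$; there is no combinatorial difficulty beyond this. To obtain \Cref{cor:kidney-no-poly}, I would then note that the size of the constructed instance is polynomial in $|V(G)|+k$ and that the combined parameter satisfies $t + \ell_p + \ell_c + |\mathcal{B}| = (k+1) + (k+1) + 0 + 1 = 2k+3 = \mathcal{O}(k)$, so the reduction is a polynomial-parameter transformation from \textsc{Directed $k$-Path} to \KE. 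Composing with \Cref{prop:dir-kpath-no-poly} rules out a polynomial compression of \KE in this combined parameter unless $\mathrm{NP} \subseteq \mathrm{coNP}/\mathrm{poly}$.
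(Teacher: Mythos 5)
Your proof is correct and follows essentially the same route as the paper: prepend $a$ in the forward direction, and in the converse use $\ell_c=0$ together with the single altruistic vertex to force the solution to be a single altruistic path whose internal arcs must lie in $E$. If anything, your edge-versus-vertex bookkeeping (a path of $k+1$ arcs covering exactly $k+1$ non-altruistic vertices) is more careful than the paper's write-up, which conflates a path ``on $k$ vertices'' with one ``of length $k+1$.''
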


\begin{proof}
In the forward direction, 
suppose $(G,k)$ is a yes instance and  contains a directed simple path $P=v_1, v_2,\cdots, v_k$ on $k$ vertices. Then in $\mathcal{G}$, $P'=a,v_1, v_2,\cdots, v_k$ is an altruistic path of length $k+1$ making $(\mathcal{G},\mathcal{B},\ell_p,\ell_c,t)$ a yes instance of \KE.

Conversely, suppose the constructed \textsc{Kidney Exchange} instance is a yes instance. Since $\ell_c=0$, no cycles are allowed, and because there is exactly one altruistic vertex $a$, the solution must contain a single altruistic path starting at $a$ of length $k+1$, necessarily of the form $P'=a,v_1, v_2,\cdots, v_k$, where the $v_i$ are distinct vertices in $V$. Removing the initial vertex $a$ yields the directed simple path $P=v_1, v_2,\cdots, v_k$ in $G$ making $(G,k)$ a yes instance.
\end{proof}
Since $k \in \mathcal{O}^\star(t+\ell_p+\ell_c+|\mathcal{B}|)$, we immediately get the following as a corollary.  

\nopolykernel


\section{\WOH~ness parameterized by Pathwidth}

In this section, we present our claimed hardness results, showing that \KE\ remains \WOH\ when parameterized by either pathwidth+$\ell_c$ or pathwidth+$\ell_p$. Specifically, we provide these two hardness results for \KE separately: one for bounded-pathwidth DAGs (where $\ell_c = 0$) and another for bounded-pathwidth graphs with no altruistic vertices (where $\ell_p = 0$). These results follow from the known \WOH of the \BP\ problem with unary-encoded weights~\cite{DBLP:journals/jcss/JansenKMS13}. 
Note that, Maiti et al.~\cite{MaitiD22} showed that \KE\ is \FPT\ when parameterized by the combined parameter of \emph{treewidth}, $\ell_p$, and $\ell_c$, and consequently also \FPT\ when parameterized by the combined parameter of \emph{pathwidth}, $\ell_p$, and $\ell_c$. 
Together with our result, this provides a complete characterization of the parameterized complexity of \KE\ with respect to \emph{pathwidth}, $\ell_p$, and $\ell_c$.
We begin with the formal definition of the \BP\ problem, followed by our claimed parameter-preserving reductions to \KE.

\begin{definition}[\BP]
    Given a set \UU of $n$ items with positive integral sizes $w_1,\ldots,w_n$ and $k$ bins each of size $\frac{W}{k}$ where $W=\sum_{i=1}^n w_i$, compute if there exists a partition of the items into $k$ parts such that the total size of every part is exactly $\frac{W}{k}$. We denote an arbitrary instance of \BP by $(\UU,k)$.
\end{definition}

We know that \BP is \WOH parameterized by $k$ even when all the numbers in the input are encoded in unary~\cite{DBLP:journals/jcss/JansenKMS13}. The high-level idea of our reduction is as follows. Each item from the bin packing becomes a small chain of donors and recipients, and each bin becomes a longer chain that starts from an altruistic donor.
We connect these parts in such a way that deciding which item chains attach to which altruistic chain is exactly the same as deciding which items go into which bins.
If the items can be evenly packed into bins, then the corresponding kidney exchange instance also has a perfect matching of transplants, and vice versa.
Because our construction maintains the acyclicity of the graph and does not significantly increase its structural complexity, this demonstrates that the kidney exchange problem remains challenging even for graphs with low pathwidth. We now present the formal proof.

\whardnocycle


\begin{proof}
    We reduce from \BP parameterized by $k$. Let $(\UU=\{w_i: i\in[n]\},k)$ be an arbitrary instance of \BP. We assume without loss of generality that $w_i\ge 3k^2n^2$ for every $i\in[n]$; if it is not the case, then we multiply $w_i$ with $3k^2n^2$ for every $i\in[n]$. We now construct an instance $(\GG,\BB,\el_p,\el_c,t)$ of \KE. For every $w_i\in\UU$, we have a directed path $p_i$ starting from $u_{i,s}$ to $u_{i,\lambda}$ consisting of $w_i$ vertices. We also have a directed path $q_j=(a_{j,1},\ldots,a_{j,n})$ for every $j\in[k]$. Other than the edges in the above paths, we also have the following edges. For every $i\in[n]$, we have an edge from $a_{j,i}$ to $u_{i,s}$. We also have an edge from $u_{i,\lambda}$ to $a_{j,i+1}$ for every $i\in[n-1]$. This finishes the description of \GG. We note that \GG is an acyclic graph with a following topological order.
    \[a_{1,1},\ldots,a_{k,1},\vec{\VV}(p_1),a_{1,2},\ldots,a_{k,2},\vec{\VV}(p_2),\ldots,\vec{\VV}(p_n)\]
    where $\vec{\VV}(p_i)$ is the vertices in $p_i$ ordered as the directed path $p_i$ from $u_{i,s}$ to $u_{i,\lambda}$ for $i\in[n]$. The set \BB of altruistic vertices is $\{a_{j,1}:j\in[k]\}$. We set $\el_p=\frac{W}{k}+n$ where $W=\sum_{i=1}^n w_i$ and $\el_c$ arbitrarily --- value of $\el_c$ does not matter since the graph of the \KE instance is acyclic. We define $t=|\VV(\GG)\setminus\BB|$, that is, the target is to cover every non-altruistic vertices in \GG. We now claim that the two instances are equivalent.

    In one direction, let us assume that the \BP instance is a \yes instance. Let $(X_1,\ldots,X_k)$ be a partition of \UU such that the total size of items in $X_j$ is $\frac{W}{k}$ for every $j\in[k]$. For every $j\in[k]$, we define a path $\PP_j$ as
    \[\PP_j = a_{j,1},Q_1,a_{j,2},Q_2,\ldots,a_{j,n},Q_n\]
    where $Q_i$ is $\vec{\VV}(p_i)$ if $w_i\in X_j$ and the empty set otherwise. We first observe that, since $(X_1,\ldots,X_k)$ is a partition of \UU, $(\VV(\PP_1),\ldots,\VV(\PP_k))$ is a partition of $\VV[\GG]$. Also, since the total size of items in $X_j$ is $\frac{W}{k}$, each $\PP_j$ contains $\frac{W}{k}+n$ vertices which is $\el_p$. Hence, the \KE instance is a \yes instance.

    In the other direction, let us assume that the \KE instance is a \yes instance. Since $\el_p=\frac{W}{k}+n, t=|\VV(\GG)\setminus\BB|,$ and $|\BB|=k$, there exist $k$ disjoint paths $\PP_1,\ldots,\PP_k$ in \GG each of length $\el_p=\frac{W}{k}+n$ covering all the vertices of \GG. For every $i\in[n]$, we observe that each path $\PP_j, j\in[k]$ must either contain all the vertices of the path $p_i$ or none of its vertices, since $\PP_j$ can neither end inside $p_i$ as then the next vertex cannot be covered by any other path nor move from inside $p_i$ to outside $p_i$ as there is no such edge in \GG. We now define $X_j=\{w_i: i\in[n], \PP_j \text{ contains the path }p_i\}$ for $j\in[k]$. We claim that $\sum_{x\in X_j}x=\frac{W}{k}$. Indeed, otherwise $\PP_j$ covers either at most $\frac{W}{k}+n^2-3k^2n^2$ vertices or at least $\frac{W}{k}+n^2+3k^2n^2$ vertices since $w_i\ge 3k^2n^2$ for every $i\in[n]$. We now refute both possibilities, thereby proving the claim. The path $\PP_j$ cannot cover at least $\frac{W}{k}+n^2+3k^2n^2$ vertices since $\el_p=\frac{W}{k}+n$. Also, the path $\PP_j$ cannot cover at most $\frac{W}{k}+n^2-3k^2n^2$ vertices, since then the remaining $k-1$ paths each of length at most $\el_p$ cannot cover all the non-altruistic vertices in \GG. Hence, the \KE instance is a \yes instance too.

    Finally, we observe that the pathwidth of \GG is at most $k-1$ since the following sequence of bags each containing at most $k$ vertices is a valid path decomposition of \GG.
    \[\{a_{1,1},\ldots,a_{k,1}\},\vec{\VV}(p_1),\{a_{1,2},\ldots,a_{k,2}\},\vec{\VV}(p_2),\ldots,\vec{\VV}(p_n)\]
    Now the result follows since \BP is \WOH parameterized by $k$.
\end{proof}

We can modify the reduced instance of \Cref{thm:pathwidth-path} to show that \KE is \WOH with respect to pathwidth even when there is no altruistic vertex. We add an edge from the last vertex of the path $p_n$ to each altruistic vertex. Everything else remains the same. We present the detailed proof below.

\noaltruisticvertex


\begin{proof}
    From an arbitrary instance of \BP, we construct a \KE instance exactly as in the proof of \Cref{thm:pathwidth-path} except we also add an edge from $u_{n,\lambda}$ (which is the last vertex of the path $p_n$) to $a_{j,1}$ for every $j\in[k]$, there is no altruistic vertex, $\el_c=\frac{W}{k}+n$, $\el_p$ is anything (since there is no altruistic vertex, the value of $\el_p$ is irrelevant), and $t=\VV[\GG]$. We now prove equivalence the \KE instance with the \BP instance.

    In one direction, let us assume that the \BP instance is a \yes instance. Let $(X_1,\ldots,X_k)$ be a partition of \UU such that the total size of items in $X_j$ is $\frac{W}{k}$ for every $j\in[k]$. For every $j\in[k]$, we define a cycle $\CC_j$ as
    \[\CC_j = a_{j,1},Q_1,a_{j,2},Q_2,\ldots,a_{j,n},Q_n,a_{j,1}\]
    where $Q_i$ is $\vec{\VV}(p_i)$ if $w_i\in X_j$ and the empty set otherwise. We first observe that, since $(X_1,\ldots,X_k)$ is a partition of \UU, $(\VV(\CC_1),\ldots,\VV(\CC_k))$ is a partition of $\VV[\GG]$. Also, since the total size of items in $X_j$ is $\frac{W}{k}$, each $\CC_j$ contains $\frac{W}{k}+n$ vertices which is $\el_c$. Hence, the \KE instance is a \yes instance.

    In the other direction, let us assume that the \KE instance is a \yes instance. Since $\el_c=\frac{W}{k}+n, t=|\VV(\GG)\setminus\BB|,$ and $|\BB|=k$, there exist $k$ disjoint cycles $\CC_1,\ldots,\CC_k$ in \GG each of length $\el_c=\frac{W}{k}+n$. For every $i\in[n]$, we observe that each cycle $\CC_j, j\in[k]$ must either contain all the vertices of the path $p_i$ or none of its vertices, since $\CC_j$ can neither end inside $p_i$ as then the next vertex cannot be covered by any other path nor move from inside $p_i$ to outside $p_i$ as there is no such edge in \GG. We now define $X_j=\{w_i: i\in[n], \CC_j \text{ contains the path }p_i\}$ for $j\in[k]$. We claim that $\sum_{x\in X_j}x=\frac{W}{k}$. Indeed, otherwise $\PP_j$ covers either at most $\frac{W}{k}+n^2-3k^2n^2$ vertices or at least $\frac{W}{k}+n^2+3k^2n^2$ vertices since $w_i\ge 3k^2n^2$ for every $i\in[n]$. We now refute both the possibilities, thereby proving the claim. The cycle $\CC_j$ cannot cover at least $\frac{W}{k}+n^2+3k^2n^2$ vertices since $\el_c=\frac{W}{k}+n$. Also, the cycle $\CC_j$ cannot cover at most $\frac{W}{k}+n^2-3k^2n^2$ vertices, since then the remaining $k-1$ cycles each of length at most $\el_c$ cannot cover all the non-altruistic vertices in \GG. Hence, the \KE instance is a \yes instance too.

    Finally, we observe that the pathwidth of \GG is at most $2k-1$ since the following sequence of bags each containing at most $k$ vertices is a valid path decomposition of \GG where $\AA=\{a_{1,1},\ldots,a_{k,1}\}$ and $\AA\cup\vec{\VV}(p_i):=\AA\cup\{u_{i,s}\},\ldots,\AA\cup\{u_{i,\lambda}\}$ for $i\in[n]$.
    

    \begin{multline*}
\{a_{1,1}, \ldots, a_{k,1}\},\; 
\AA \cup \vec{\VV}(p_1),\;
\AA \cup \{a_{1,2}, \ldots, a_{k,2}\},\;
\AA \cup \vec{\VV}(p_2),\\ \ldots,
\AA \cup \vec{\VV}(p_n)
\end{multline*}
    Now the result follows since \BP is \WOH parameterized by $k$.
\end{proof}


\section{NP-hardness on DAGs}

To establish \textsf{NP}-hardness in our setting, we first reduce a classical \textsc{3-Partition} instance into an equivalent instance of \textsc{Fixed-Size-3-Partition}, where each feasible group is required to contain exactly three elements that collectively achieve a specified \emph{target sum}. This reduction is crucial for our subsequent reduction to \KE, in which altruistic paths of fixed lengths in the constructed solution correspond precisely to the groups attaining the \emph{target sum}.


\subsection{Reduction from \textsc{3-Partition} to \textsc{Fixed-Size-3-Partition}}

In the standard \textsc{3-Partition} problem, we are given a multiset 
$A = \{a_1,\dots,a_{3m}\}$
of positive integers and a target value $B$. The task is to decide whether $A$ can be split into exactly $m$ disjoint triples whose elements each sum to $B$. This problem remains \textsf{NP}-hard even under strong assumptions such as $B/4 < a_i < B/2$ and for constant values of $B$. For our reduction, we do not rely on any such promise; instead, we enforce triple cardinality groups via an additive shift.

\begin{definition}(\textsc{Fixed-Size-3-Partition})
  In this problem, we are given a multiset 
$A = \{a_1, \dots, a_{3m}\}$ of positive integers and a target value $B$, 
with the additional property that any subset of elements summing to $B$ contains exactly three elements. 
The question is whether there exists a partition of $A$ into $m$ disjoint groups, each having a total sum equal to $B$.
\end{definition}

Below, we provide a polynomial-time reduction from \textsc{3-Partition} to \textsc{Fixed-Size-3-Partition}, thereby establishing the \textsf{NP}-hardness of the latter problem.
The key idea is to enforce that every feasible group contains exactly three elements by applying a uniform additive shift to all item values. By choosing a constant $C$ much larger than any of the original values, any subset of fewer than three elements cannot reach the new target sum (which is larger than $3C$ and smaller than $4C$), while any subset of more than three elements necessarily exceeds it. Thus, only groups of size three can potentially satisfy the target constraint, and among such groups, feasibility still depends on the specific magnitudes of the $a_i$. 
Once three elements are selected, subtracting the shift restores the original sum, ensuring that each valid group corresponds precisely to a triple in the original instance. The formal construction and subsequent arguments are presented below.

\medskip
\noindent\textbf{Construction.}
Given an instance $(A, B)$ of \textsc{3-Partition}, where 
$A = \{a_1, \dots, a_{3m}\}$, 
we define a constant 
$C := 10B$ 
and construct an equivalent instance $(A=\{a'_1,a'_2,\dots a'_{3m}\},B'=3C+B)$ of \textsc{Fixed-Size-3-Partition} where
$a'_i := a_i + C \quad \text{for } i \in [3m]$.

\begin{lemma}
$(A,B)$ is a yes-instance of \textsc{3-Partition} if and only if $(A',B')$ is a yes-instance of \textsc{Fixed-Size-3-Partition}.
\end{lemma}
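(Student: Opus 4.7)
The plan is to prove both directions of the equivalence by exploiting the large additive shift $C=10B$, which forces any subset summing to the new target $B'=3C+B$ to have cardinality exactly three. Before doing so, I would note that we may assume $\sum_{i=1}^{3m} a_i = mB$, since otherwise $(A,B)$ is trivially a no-instance of \textsc{3-Partition} and we can output any fixed no-instance of \textsc{Fixed-Size-3-Partition}. Under this assumption, $\sum a'_i = mB + 3mC = mB'$, so $A'$ must be split into exactly $m$ groups of sum $B'$ if it is a yes-instance.

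The first and main step is to verify that $(A',B')$ meets the structural promise defining \textsc{Fixed-Size-3-Partition}, namely that every subset of $A'$ summing to $B'$ contains exactly three elements. For any subset of size $k$, its sum lies in the interval $[kC + k,\; kC + kB]$ since each $a_i$ is a positive integer with $a_i \le B$ (any $a_i \ge B$ could not appear in a summing triple in a yes-instance, and can be handled by preprocessing if present). With $C=10B$, a $k$-subset with $k\le 2$ has sum at most $2C+2B = 22B < 31B = 3C+B = B'$, and a $k$-subset with $k\ge 4$ has sum at least $4C = 40B > 31B = B'$. Hence only $k=3$ subsets can achieve the target $B'$, which is precisely the Fixed-Size-3-Partition property.

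For the forward direction, if $A$ admits a partition $\{T_1,\ldots,T_m\}$ into triples each of sum $B$, then replacing every element $a_i$ by $a'_i = a_i + C$ within each $T_j$ yields a triple of sum $3C + B = B'$, so the same partition (on the primed elements) witnesses a yes-instance of \textsc{Fixed-Size-3-Partition}. For the backward direction, a valid partition of $A'$ into groups summing to $B'$ consists, by the promise just established, entirely of groups of size exactly three; subtracting the shift $3C$ from each group sum recovers a sum of $B$ among the corresponding three original elements, giving the desired triple partition of $A$.

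The only nontrivial step is the gap argument that isolates $k=3$ as the unique admissible cardinality; everything else is a routine sum-preserving bijection between partitions of $A$ and partitions of $A'$. The gap relies on the separation $2(C+B) < 3C+B < 4C$, which the choice $C=10B$ comfortably guarantees, and the reduction is clearly polynomial-time since it only shifts each value and updates the target.
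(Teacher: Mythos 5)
Your proof is correct and follows essentially the same route as the paper: the same additive shift $C=10B$, the same gap argument showing $2(C+B) < B' < 4C$ forces every group hitting the target to have exactly three elements, and the same sum-preserving correspondence in both directions. The only difference is that you handle the edge cases (elements exceeding $B$, total sum not equal to $mB$) slightly more explicitly via preprocessing, which the paper asserts more casually.
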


\begin{proof}
Before presenting the proof, we note that since all $a_i$ are positive integers, the input instance of \textsc{3-Partition} cannot contain any element $a_i > B$. 
With this observation, consider any group of size at most 2 in the constructed instance of \textsc{Fixed-Size-3-Partition}, can have a maximum sum of $2B+2C<3C<B'$. On the other hand, any group containing four elements has a minimum possible sum of $4C > B'$, which further implies that $k_r \le 3$. 
Hence, the size of any group achieving the target sum $B'$ must be exactly $3$.

\textbf{(If).} Suppose $(A,B)$ is a yes-instance of \textsc{3-Partition} and $A$ can be partitioned into $m$ disjoint triples
$
\big\{\{a_{r_1}, a_{r_2}, a_{r_3}\}\big\}_{r \in [m]} 
$
with
$
\sum_{j=1}^3 a_{{r_j}} = B.
$
Then for each such triple the corresponding triples of $
\big\{\{a'_{r_1}, a'_{r_2}, a'_{r_3}\}\big\}_{r \in [m]} 
$ forms a valid partition of $A'$ into $m$ groups since
$
\sum_{j=1}^3 a'_{{r_j}}
= \sum_{j=1}^3 (a_{{r_j}} + C)
= B + 3C
= B',
$ making  $(A',B')$  a yes-instance of \textsc{Fixed-Size-3-Partition}.

\smallskip
\textbf{(Only if).} Conversely, suppose $(A',B')$ is a yes-instance of \textsc{Fixed-Size-3-Partition} and $A'$ admits a partition into $m$ groups each of size 3 as argued in the beginning part of our proof, $\big\{\{a'_{r_1}, a'_{r_2}, a'_{r_3}\}\big\}_{r \in [m]}$ 
each with sum $B' = 3C + B$. But then $\big\{\{a_{r_1}, a_{r_2}, a_{r_3}\}\big\}_{r \in [m]}$ forms a valid partition of $A$ into $m$ groups 
since
$
\sum_{j=1}^3 a_{{r_j}}
= \sum_{j=1}^3 (a'_{{r_j}} - C)
= B' - 3C
= B,
$ making  $(A,B)$ is a yes-instance of \textsc{3-Partition}.
\end{proof}

Since the magnitudes of all elements including the target sum $B'$ increase by at most a factor linear in $B$, the reduction is polynomial in the input size.  And \textsc{3-Partition} being  \textsf{NP}-hard, immediately implies the following.

\begin{theorem}
\textsc{Fixed-Size-3-Partition} is \textsf{NP}-hard even for constant values of $B'$.
\end{theorem}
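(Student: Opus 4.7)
The plan is to deduce the theorem directly from the equivalence lemma proved just above, combined with the \textsf{NP}-hardness of \textsc{3-Partition} cited earlier in the section. First, I would verify that the map $(A,B)\mapsto (A',B')$ given by $a'_i = a_i + C$ with $C = 10B$ and $B' = 3C + B$ is computable in polynomial time and increases the encoding length of the input by only a polynomial factor. Since every new value is $O(B)$, any polynomial (or constant) bound on the original numeric parameters automatically transfers to the constructed instance. The lemma already certifies correctness of the reduction, so \textsc{Fixed-Size-3-Partition} inherits \textsf{NP}-hardness from \textsc{3-Partition}.

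For the stronger assertion that hardness persists even when $B'$ is a constant, I would rely on the fact, invoked in the preceding discussion, that \textsc{3-Partition} remains \textsf{NP}-hard for constant values of $B$. Because $B' = 31 B$, fixing $B$ to any constant immediately yields a (larger) constant bound on $B'$, so the constructed family of instances still satisfies the constant-target constraint witnessing the hardness.

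The main conceptual ingredient, and likely the only non-routine part, is to confirm that the additive shift $C = 10B$ is large enough to force any group summing to $B'$ to contain exactly three elements, which is precisely the defining requirement of \textsc{Fixed-Size-3-Partition}. This is where the size of $C$ matters: groups of size at most two have sum bounded by $2B + 2C < 3C < B'$, while groups of size at least four have sum at least $4C > B'$. The lemma already carries out this calculation, so once this structural guarantee is taken on board, the theorem follows mechanically by composing the polynomial-time reduction with the hardness of \textsc{3-Partition} and noting the linear relation between $B$ and $B'$.
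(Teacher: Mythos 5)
Your proposal is correct relative to the paper and follows essentially the same route: the paper likewise concludes by noting that all magnitudes (including $B'=3C+B=31B$) grow only linearly in $B$, so the reduction is polynomial, and then invokes the hardness of \textsc{3-Partition}, with the constant-$B'$ clause inherited exactly as you describe from the paper's earlier assertion that \textsc{3-Partition} is \NPH for constant values of $B$. The only caveat, which your argument shares with the paper's rather than introduces, is that this last assertion is really the standard \emph{strong} \NP-hardness of \textsc{3-Partition} (numbers polynomially bounded in the input size, not literally constant), so the ``constant $B'$'' phrasing should be read in that sense in both proofs.
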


\subsection{Reduction from \textsc{Fixed-Size-3-Partition} to \KE on DAGs}

We next prove that the \KE problem on directed acyclic graphs is \textsf{NP}-hard, even for a fixed(constant) value of $\ell_p$ by showing a polynomial reduction from \textsc{Fixed-Size-3-Partition} problem.


\noindent\textbf{Construction.} Let $(A' = \{a'_1,\dots,a'_{3m}\}, T')$ be an instance of \textsc{Fixed-Size-3-Partition}. We construct an equivalent instance $(\mathcal{G},\mathcal{B},\ell_p,\ell_c,t)$ of \KE as follows.

\begin{itemize}
\item For each $a'_i \in A'$, create a directed (element-)path $P_i$ of $a'_i$ many vertices (see Figure~\ref{fig:hardness_dag}).

\item Add $m$ altruistic vertices $\mathcal{B}=\{s_1,\dots,s_m\}$. From each $s_j$, $j\in [m]$, add an outgoing edge/arc to the first vertex of every $P_i$, $i\in[3m]$.

\item Index the paths $P_1,\dots,P_{3m}$ and, for every $i<j$, add an outgoing edge/arc (concatenation arcs) from the last vertex of $P_i$ to the first vertex of $P_j$. This preserves acyclicity and provides an opportunity for paths to concatenate.

\item Set $\ell_p := B' + 1$ and $t=|V(\mathcal{G})|=\sum_{i\in [3m]}a'_i+m$. We consider vertex-disjoint altruistic paths of length at most $\ell_p$ and are interested in solutions that cover all vertices of $\mathcal{G}$.
\end{itemize}

\begin{figure}
\arxivversion{\begin{center}\includegraphics{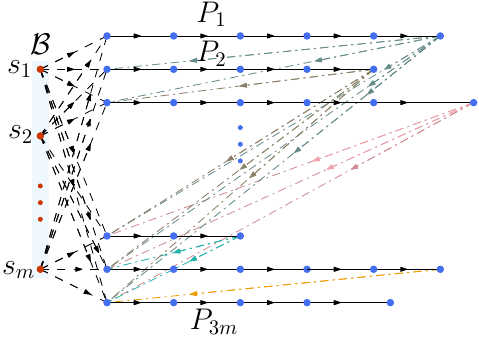}\end{center}}
\aamasversion{\includegraphics[width=2.2in]{hardness_dag.pdf}}
\caption{Reduction from \textsc{Fixed-Size-3-Partition} problem}
\label{fig:hardness_dag}
\end{figure}

\begin{lemma}
$(A',B')$ is a yes-instance of \textsc{Fixed-Size-3-Partition} if and only if $(\mathcal{G},\mathcal{B},\ell_p,\ell_c,t)$ is a yes instance of \KE.
\end{lemma}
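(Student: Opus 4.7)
The plan is a straightforward two-way implication that relies on the rigid structure of the constructed DAG. The only arcs leaving a vertex strictly inside some element-path $P_i$ are the internal arcs of $P_i$ itself, so the only places where an altruistic path can ``join'' or ``leave'' an element-path are its first and last vertices. This observation, together with the concatenation arcs going strictly from $P_i$ to $P_j$ with $i<j$, will imply that every altruistic path has the canonical form $s_j, P_{i_1}, P_{i_2}, \ldots, P_{i_k}$ with $i_1 < i_2 < \cdots < i_k$, and that each $P_i$ is either used entirely by one such altruistic path or not at all.

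For the forward direction, I would start from a partition $\{a'_{r_{j,1}}, a'_{r_{j,2}}, a'_{r_{j,3}}\}_{j \in [m]}$ of $A'$ into triples summing to $B'$, relabel each triple so that $r_{j,1} < r_{j,2} < r_{j,3}$, and build the altruistic path $\mathcal{Q}_j = s_j \to P_{r_{j,1}} \to P_{r_{j,2}} \to P_{r_{j,3}}$ using the arcs $s_j \to P_{r_{j,1}}$ and the concatenation arcs between successive element-paths. A direct count shows $|\VV(\mathcal{Q}_j)| = 1 + a'_{r_{j,1}} + a'_{r_{j,2}} + a'_{r_{j,3}} = 1 + B'$, giving exactly $B'$ edges, hence $\ell_p = B'+1$ is respected. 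Since the triples partition $A'$, the paths $\mathcal{Q}_1, \ldots, \mathcal{Q}_m$ are vertex-disjoint and collectively cover every vertex of $\mathcal{G}$, so the \KE instance is a \yes instance.

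For the reverse direction, assume there exists a feasible collection $\mathcal{C}$ of vertex-disjoint altruistic paths covering the required number of vertices. Since $|\mathcal{B}| = m$, at most $m$ paths are present, and the structural observation above lets me argue that each $P_i$ is contained in exactly one path of $\mathcal{C}$. A double-counting argument using the total vertex count $m(B'+1)$, the bound of $m$ paths, and the length bound $\ell_p = B'+1$ will force each path to use exactly $B'+1$ edges and each $s_j$ to be the start of exactly one such path. For each $j \in [m]$ let $S_j \subseteq A'$ be the set of item-values $a'_i$ whose element-paths appear in the path starting at $s_j$; summing vertex counts gives $\sum_{a' \in S_j} a' = B'$. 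Because we reduced from \textsc{Fixed-Size-3-Partition}, any subset of $A'$ that sums to $B'$ has cardinality exactly three, so $(S_1, \ldots, S_m)$ is a 3-partition witnessing that the \textsc{Fixed-Size-3-Partition} instance is a \yes instance.

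The main technical obstacle is the ``each $P_i$ is entirely contained in one altruistic path'' claim, which is what makes the arithmetic line up cleanly. I would justify it by arguing in two steps: first, any altruistic path can only enter $P_i$ through its first vertex (since internal vertices of $P_i$ have no incoming arcs from outside $P_i$, the only in-arcs come from previous element-paths or the altruistic vertices, all aimed at the first vertex of $P_i$); second, once inside $P_i$, the path can only leave from the last vertex of $P_i$ (since internal out-arcs stay inside $P_i$). Combined with the global vertex-coverage requirement, this rules out partially used element-paths and forces the tight length equality $\sum_{a' \in S_j} a' = B'$ that triggers the \textsc{Fixed-Size-3-Partition} rigidity.
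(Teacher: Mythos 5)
Your overall strategy is the same as the paper's: the structural observations (element-paths can only be entered at their first vertex and left at their last, a path ending at an internal vertex strands that vertex's successor, and the concatenation arcs force increasing index order) are exactly the ones the paper uses, and your forward direction is identical. The gap is in the counting step of the only-if direction. You assert that the total vertex count $m(B'+1)$, the bound of $m$ paths, and the cap $\ell_p=B'+1$ ``force each path to use exactly $B'+1$ edges''; this is arithmetically impossible, since $m$ vertex-disjoint paths with $B'+1$ edges each would occupy $m(B'+2)$ distinct vertices. What the count actually gives is that the $m$ paths have $mB'$ edges in total, i.e.\ $B'$ edges on average. Since the cap is $B'+1$ edges per path (under the paper's convention that length is the number of edges), this average does not pin down the individual sums: a solution could consist of one altruistic path whose element-paths sum to $B'+1$ and another summing to $B'-1$, both within the length bound and together still covering every vertex. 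In that case the sets $S_j$ need not all sum to $B'$ (they would correspond to original triples summing to $B+1$ and $B-1$), so the tight equality $\sum_{a'\in S_j}a'=B'$ that triggers the fixed-size rigidity does not follow from the stated double count.

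To be fair, the paper's own write-up has the same soft spot: it simply asserts that each altruistic path must have length exactly $\ell_p$, which is only forced if one caps the number of \emph{vertices} per path at $B'+1$ (equivalently, $B'$ edges); then the cap of $B'$ non-altruistic vertices per path against a total of $mB'$ non-altruistic vertices is tight, each $S_j$ must sum to exactly $B'$, and the fixed-size property yields triples. So the repair is a one-line adjustment of the length bookkeeping (take the bound to be $B'$ edges, or argue throughout in terms of vertex counts), but as written your ``exactly $B'+1$ edges'' step is where the proof does not close.
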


\begin{proof}
\textbf{(If).} 
Suppose $(A',B')$ is a yes-instance of \textsc{Fixed-Size-3-Partition} where $A'$ can be partitioned into $m$ disjoint triples $\big\{\{a'_{r_1},a'_{r_2},a'_{r_3}\}\big\}_{r\in[m]}$ each summing to $T'$. For each triple $\{a'_{r_1},a'_{r_2},a'_{r_3}\}$ (where we assume w.l.o.g. that $r_1<r_2<r_3$), we construct an altruistic path in solution of length exactly $\ell_p$ that starts at $s_r$ followed by the paths $P_{r_1}$, $P_{r_2}$, and $P_{r_3}$. These $m$ altruistic paths together cover all vertices while each being of length exactly $\ell_p$.

\smallskip
\textbf{(Only if).} 
Conversely, suppose $(\mathcal{G}, \mathcal{B}, \ell_p, \ell_c, t)$ is a yes-instance of \KE, admitting a collection of $m$ vertex-disjoint altruistic paths that together cover all vertices of $\mathcal{G}$. 
Note that each such altruistic path must be of length exactly $\ell_p$ in order to cover all vertices of the graph. 
Moreover, each altruistic path begins at an altruistic vertex $s_r$ and immediately enters some element-path $P_i$ at its first vertex, since the only outgoing edges from altruistic vertices lead to first/initial vertices of the paths constructed from elements.  
By construction, each element-path $P_i$ is a directed path with only incoming edges to its first vertex, and only outgoing edges  from its last vertex. 
Consequently, an altruistic path can only traverse entire element-paths; it cannot enter or exit from an internal vertex of an element-path. Additionally, no altruistic path can end at any vertex other than the last vertex of an element-path, since neither the internal vertices nor the last vertex can be covered by any other altruistic path. In particular, this ensures that the element-paths are assigned integrally/completely to the altruistic paths. Finally, the concatenation arcs enforce that any altruistic path traverses element-paths in increasing index order. The integral/complete assignment of element paths to altruistic paths along with the fact that every altruistic path has length exactly $\ell_p=1+B'$ enforces that every altruistic path contains precisely 3 element paths. Therefore the altruistic paths induce a partition of the index set $\{1,\dots,3m\}$ into exactly $m$ groups (one group per altruistic path), where a group corresponding to an altruistic path consists of the indices of the element-paths it traverses in increasing order. This implies that the sum total of all non-altruistic vertices in any such group of size 3 is exactly equal to $B'$. Hence this partition of $3m$ indices also gives a partition of $A'$ into $m$ groups each summing up to $B'$, making $(A',B')$ a yes instance of \textsc{Fixed-Size-3-Partition}. 

\end{proof}

\smallskip
\noindent\textbf{Complexity.}  
The reduction introduces $\mathcal{O}(\sum_i a'_i)$ vertices and $\mathcal{O}((\sum_i a'_i)^2)$ edges, both polynomial in the input size. And, $\ell_p$ being $B'+1$ implies the following.

\paranphardness

\dfvsparanphard

\section{Conclusion}

In this work, we advanced the theoretical understanding of the Kidney Exchange problem by establishing both faster parameterized algorithms and tighter lower bounds. Specifically, we presented the fastest known deterministic fixed-parameter tractable (FPT) algorithm parameterized by the number of patients helped, improving the previous best running time from $\OO^\star\left(14^t\right)$ to $\OO^\star\left((4e)^t\right)\approx \OO^\star\left(10.88^t\right)$. Our approach combines color-coding, derandomized through $t$-perfect hash families, with a subset-based dynamic programming formulation that systematically constructs feasible cycles and altruistic chains. This yields a clean, purely combinatorial algorithm that is efficient, transparent, and asymptotically optimal among deterministic approaches of this kind.

Beyond algorithmic improvements, we showed that the problem admits no polynomial kernel when parameterized by $t+\el_p+\el_c+|\BB|$, unless $\mathrm{NP}\not\subseteq\mathrm{coNP}/\mathrm{poly}$, thus ruling out the possibility of efficient instance compression under standard complexity assumptions. We further strengthened the known hardness results by proving that Kidney Exchange is \WOH parameterized by pathwidth, even under strong structural restrictions such as acyclicity or the absence of altruistic donors. This settles a natural open question regarding the tractability of the problem on graphs of bounded pathwidth. Complementing these results, we also demonstrated para-NP-hardness with respect to the combined parameters DFVS,$\el_p, \el_c,$ thereby completing a near-exhaustive map of the problem’s parameterized complexity landscape.

Taken together, our findings delineate a sharp boundary between tractable and intractable regimes for Kidney Exchange under a variety of structural and quantitative parameters. From a broader perspective, these results strengthen the theoretical foundations of algorithmic kidney exchange, clarifying what forms of efficiency are achievable through parameterization and what barriers are provably inherent.

There are several promising avenues for future research. First, an immediate direction is to further improve the running time of FPT algorithms, either by tightening constants in the exponent or by exploring alternative algorithmic paradigms such as representative sets or inclusion–exclusion–based methods. Another direction is to investigate lower bounds under the Exponential Time Hypothesis (ETH) or Strong ETH, which could clarify whether our current algorithm is asymptotically optimal. It would also be valuable to reduce the polynomial dependence on the input size $n$, as this term often dominates in practical settings.

From a modeling perspective, our work focuses on the classical cycle-and-chain formulation. Extending these techniques to richer and more realistic models—including weighted or utility-based exchanges, dynamic or online formulations, and fairness- or priority-aware variants—remains an important open challenge. Such extensions would bridge the gap between parameterized theory and the design of scalable, value-aligned kidney exchange algorithms used in national registries. Finally, exploring kernelization-inspired preprocessing heuristics or hybrid parameterizations that combine structural and solution-size parameters may provide new insights into balancing theoretical rigor with real-world efficiency.

\section*{Acknowledgement}
Aritra Banik acknowledges support from the Anusandhan National Research Foundation (ANRF) (erstwhile Science, Education, and Research Board (SERB)), Government of India, via the project MTR/2022/000253.
Palash Dey thanks the Anusandhan National Research Foundation (ANRF) (erstwhile Science, Education, and Research Board (SERB)), Government of India, for supporting this work through Core Research Grant under file no. CRG/2022/003294.

\aamasversion{\bibliographystyle{ACM-Reference-Format}}
\arxivversion{\bibliographystyle{alpha}}
\bibliography{references}

\end{document}